\newtheorem{assumption}{Assumption}
\newtheorem{Proposition}{Proposition}
\def\rea{\mathbb{R}}
\title{\LARGE \bf
Parameter identification algorithm for a LTV system with partially unknown state matrix
}
\author{Olga Kozachek$^{1}$,  Nikolay Nikolaev$^{1}$, Olga Slita$^{1}$ and Alexey Bobtsov $^{1}$
\thanks{$^{1}$ Olga Kozachek, Alexey Bobtsov,  Nikolay Nikolaev and Olga Slita are with Department of Control Systems and Robotics, ITMO University, Kronverkskiy av. 49, Saint-Petersburg, 197101, Russia
        {\tt\small oakozachek@mail.ru, bobtsov@mail.ru, nikona@yandex.ru, o-slita@yandex.ru.}}%
}
\begin{document}

\maketitle
\thispagestyle{empty}
\pagestyle{empty}

\begin{abstract}

In this paper an adaptive state observer and parameter identification algorithm for a linear time-varying system is developed under condition that the state matrix of the system contains unknown time-varying parameters of a known form. The state vector is observed using only output and input measurements without identification of the unknown parameters. When the state vector estimate is obtained, the identification algorithm is applied to find unknown parameters of the system.

\end{abstract}

\section{INTRODUCTION}

One of the important problems in dynamical systems control is observation of the system state. In some cases it is possible to use sensors. However, sometimes the state vector cannot be measured directly. So state observation algorithms are applied, when it is impossible to place sensors to measure all of the state variables,. 

There are many algorithms that allow to obtain state vector estimate when the parameters of the system are known (for example, Kalman filter \cite{Kalman}, Luenberger observer \cite{Luenberger}, generalized parameter estimation based observer \cite{b18}). However, these methods can not be applied when the system contains unknown parameters. The state observer design problem becomes even more complicated for the case when the unknown parameters are time-varying.

Many researchers developed observers for linear time-varying systems (LTV) with unknown parameters. In \cite{Ghousein} a refrigeration system described by a second order hyperbolic partial differential equations (PDE) coupled with LTV ordinary differential equations (ODE) is considered. The problem of the PDE states, the ODE states and the unknown parameter estimation is solved. An adaptive nonlinear neural observer based approach for state of charge estimation of lithium-ion batteries is proposed in \cite{Shen}. Another example of a state observer is proposed in \cite{Rauh}, where an interval approach is used to solve a problem of parameter identification and observer design for distributed heating systems. Nevertheless, all of the methods mentioned above can be only applied for very specific systems. 

The problem of the state observer and parameter identification algorithms design in general for LTV systems with unknown time-varying parameters is still widely open. There are not that many solutions of this problem known at the moment (see \cite{Pyrkin}, \cite{Gerasimov}). In this work an adaptive state observer based on the output measurements for a LTV system with unknown time-varying parameters is developed. Combining the proposed state observer with algorithms that were described in previous papers of the authors (such as \cite{ICUMT}, \cite{IFAC}) it is possible to obtain state vector and unknown parameters estimates.

\section{PROBLEM FORMULATION}

We consider the following time-varying system
\begin{align}
	\label{sys}
	\dot{x}(t)&= A(t)x + B(t)u,\\ 
	y(t)&=Cx(t),
	\label{out}
\end{align}
where $x(t) \in \mathbb{R}^n$ is an unmeasured state vector, $y(t) \in \mathbb{R}$ is a measured output, $u(t) \in \mathbb{R}$ is a known input, $A(t) \in \mathbb{R}^{n\times n}$ is a partially known state matrix containing unknown time-varying parameters,  $B(t) \in \mathbb{R}^{n \times 1}$ and $C \in {\mathbb{R}}^{1 \times n}$ are known matrices and their entries are assumed to be continuous and bounded.

\begin{assumption}
Let us assume that matrix $A(t)$ can be represented as a sum of known and unknown parts:
\begin{equation}
A(t) = A_0(t) + D(t),
\end{equation}
where $A_0(t)$ 	is a known matrix and $D(t) \in \mathbb{R}^{n\times n}$ is a matrix of unknown time-varying parameters of the system.
\end{assumption}

\begin{assumption}
We suppose the matrix $D(t)=D(\theta(t))$, where $\theta(t)\in \mathbb{R}^n$, such that only one component in the row can be nonzero and the component $d_{ij}=0$ for $ i=\overline{1,n}, i < j$. 

We suppose that $\theta_{i}(t)$ is an unknown time-varying sinusoidal function defined by the equation
\begin{equation} 
	\ddot{\theta}_{i} (t)=-{\omega_{i}}^{2}\theta_{i}(t), 
	\label{thet}
\end{equation}
where $ \omega_{i} > 0 $ is a constant unknown parameter.
\end{assumption}

\begin{assumption}
We assume that the state vector of the system satisfies the following:
\begin{equation*}
x_i^2(t)>0 \mbox{ for } i = \overline{1, n}.
\end{equation*}
\end{assumption}

In the previous papers \cite{ICUMT}, \cite{IFAC} authors solved the problem of unknown parameters identification and state observation for the case of the systems with delayed output measurements. However, there was a strict restriction made that $C = I_{n \times n}$, which means that the state vector is measured. In this paper the identification and observation problems are solved with no restrictions for the matrix $C$. This new result develops the previous algorithms for a wider class of systems and allows to estimate the unknown time-varying parameters when only the output vector is measured. A state observer for the system \eqref{sys}, \eqref{out} is proposed and an example of the application of the identification algorithm based on the obtained state variables estimates is provided.

\section{MAIN RESULT}

Solution of the state observation problem and unknown parameters identification is based on the several steps. On the first step we construct the state observer for time-varying system. On the second step we use autors preweous results to reconstruct unknown time-varying parameters.

\subsection{STATE OBSERVER DESIGN}

Let us consider the system \eqref{sys}, \eqref{out} taking into account the assumption 1. The system can be rewritten as
\begin{align}
	\label{sys_new}
	\dot{x} &=A(t)x + D(\theta(t))x + B(t)u,\\
	y &= Cx.	
	\label{out_new}
\end{align}

\begin{Proposition}
If for the system \eqref{sys_new}, \eqref{out_new} there exist matrices $N \in \rea^{n \times 1}$, $G \in \rea^{n \times n}$, $M \in \rea^{n \times n}$, $M_c \in \rea^{n \times n}$ and $L \in \rea^{n \times 1}$ and the following conditions are satisfied
\begin{align}
\label{con1}
B&-N-GCB=0, \\
\label{con2}
D&-GCD=0, \\
\label{con3}
A_0&-GCA_0=M, \\
\label{con4}
M_c &= M-LC,
\end{align}
with $L$ that ensure an exponential convergence of the $\tilde{x}=x - \hat{x}$ to zero in the system $\dot{\tilde{x}}=M_c\tilde{x}$, then the estimate of the state vector $\hat{x}$ can be obtained from the following observer:
\begin{equation}
\label{obs}
\dot{\hat{x}} = M\hat{x}+Nu+G\dot{y}+LC\tilde{x},
\end{equation}
that provides convergence of the state estimation error $\tilde{x}$ to zero:
\begin{equation}
\lim_{t \to \infty} \tilde{x} = \lim_{t \to \infty} (x-\hat{x}) = 0.
\end{equation}

\end{Proposition}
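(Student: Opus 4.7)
The plan is to compute the dynamics of the estimation error $\tilde{x} = x - \hat{x}$ directly and show that, under the four matching conditions \eqref{con1}--\eqref{con4}, it collapses to the autonomous linear system $\dot{\tilde{x}} = M_c \tilde{x}$, from which convergence follows immediately by the hypothesis on $L$.

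First I would differentiate the output equation to obtain an expression for $\dot{y}$ in terms of the plant state. Using \eqref{sys_new} and \eqref{out_new},
\begin{equation*}
\dot{y} = C\dot{x} = CA_0(t)x + CD(\theta(t))x + CB(t)u.
\end{equation*}
Substituting this into the observer \eqref{obs} gives an expression for $\dot{\hat{x}}$ purely in terms of $x$, $\hat{x}$, $u$, and the design matrices $M,N,G,L$.

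Next I would form $\dot{\tilde{x}} = \dot{x} - \dot{\hat{x}}$ and group terms by their multiplicand. The coefficient of $u$ is $B - N - GCB$, which vanishes by \eqref{con1}; the coefficient of $x$ arising from $D(\theta(t))$ is $D - GCD$, which vanishes by \eqref{con2}; and the coefficient of $x$ arising from $A_0(t)$ is $A_0 - GCA_0$, which equals $M$ by \eqref{con3}. What remains is
\begin{equation*}
\dot{\tilde{x}} = Mx - M\hat{x} - LC\tilde{x} = M\tilde{x} - LC\tilde{x} = (M-LC)\tilde{x}.
\end{equation*}
Invoking \eqref{con4}, this is precisely $\dot{\tilde{x}} = M_c\tilde{x}$. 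By the standing hypothesis that $L$ is chosen so that $\dot{\tilde{x}} = M_c\tilde{x}$ is exponentially stable, we conclude $\lim_{t \to \infty}\tilde{x}(t) = 0$.

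The argument itself is a short algebraic verification, so the real difficulty is not in the proof as stated but in the feasibility of the design conditions: one must exhibit matrices $N$ and $G$ satisfying \eqref{con1} and \eqref{con2} for the specific structure of $D(\theta(t))$ postulated in Assumption~2, and then an $L$ stabilizing the (possibly time-varying) pair $(M(t),C)$. Since the proposition takes these matrices as given, I would simply remark after the derivation that the structural assumptions on $D$ (lower-triangular with at most one nonzero entry per row) together with a suitable observability-type condition on $(A_0(t),C)$ make such a construction possible, and defer the explicit recipe to a subsequent construction or example.
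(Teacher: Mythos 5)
Your proposal is correct and follows essentially the same route as the paper: substitute $\dot{y}=C(A_0x+Dx+Bu)$ into the observer, form $\dot{\tilde{x}}$, cancel the $u$- and $D$-terms via \eqref{con1}--\eqref{con2}, identify the remaining state coefficient with $M$ via \eqref{con3}, and conclude $\dot{\tilde{x}}=M_c\tilde{x}$ with convergence from the hypothesis on $L$. Your closing remark on the feasibility of the design conditions goes beyond what the paper proves, but it does not affect the correctness of the verification.
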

\begin{proof}
Let us consider the state estimation error $\tilde{x}$. Its derivative is defined as:
\begin{align}
\nonumber \dot{\tilde{x}} &= Ax+Dx+Bu-M\hat{x}-Nu-G\dot{y}-LC\tilde{x} = \\
\nonumber &= A_0x+Dx+Bu-M\hat{x}-Nu- \\
\nonumber &-GC(A_0x+Dx+Bu)-LC\tilde{x} = \\
\nonumber &= (A_0-GCA_0)x - M\hat{x}+(B-N-GCB)u +\\
&+ (D-GCD)x - LC\tilde{x}. 
\end{align}

According to the conditions  \eqref{con1}, \eqref{con2}, \eqref{con3}, \eqref{con4} the equation for the estimation error derivative $\dot{\tilde{x}}$ can be rewritten in the following form:
\begin{equation}
\label{err_x}
\dot{\tilde{x}}=Mx-M\hat{x} - LC\tilde{x}=M\tilde{x}- LC\tilde{x} = M_c\tilde{x}.
\end{equation}

According to \eqref{con4} the estimation error $\tilde{x}$ in \eqref{err_x} converges to zero exponentially. 
\end{proof}

However, the observer \eqref{obs} contains unknown derivative $\dot{y}$ of the output variable. Let us define a new variable $z$ as:
\begin{equation}
\label{z}
z = \hat{x}-Gy.
\end{equation}

From \eqref{z} we have
\begin{equation}
\hat{x}=z+Gy.
\end{equation}

The derivative of the new variable can be obtained in the following form:
\begin{align}
\label{dot_z}
\nonumber \dot{z}&=M\hat{x}+Nu+LC\tilde{x} \\
&=M(z+Gy)+Nu+LC\tilde{x}.
\end{align}

Let us also define the estimate of the output:
\begin{equation}
\hat{y}=C\hat{x}.
\end{equation}

Therefore, the estimation error $\tilde{y}$ can be obtained:
\begin{equation}
\label{tilde_y}
\tilde{y}=y-\hat{y}=Cx-C\hat{x}=C\tilde{x}.
\end{equation}

Transforming \eqref{dot_z} and substituting \eqref{tilde_y} we can obtain the state vector estimate from the following system:
\begin{equation}
	\begin{cases} 
		\label{obs_fin}
		\dot{z}=Mz+MGy+Nu+Ly-LC\hat{x},\\ 
		\hat{x}=z+Gy.
	\end{cases}
\end{equation}

Once the state estimate vector $\hat{x}$ is obtained, known state-based identification algorithms can be used to find the unknown time-varying parameters. It is important to notice that the choice of the algorithm is provided and restricted by the system properties and the form of the system matrices.

\subsection{TIME-VARYING PARAMETERS IDENTIFICATION}

In this subsection we will use the idea from \cite{ICUMT} and \cite{IFAC} to solve the main problem of the unknown parameter identification.
Taking into account the assumption 2 and the fact that the state vector estimate is obtained, we can write the first equation of the system \eqref{sys_new} in the following form:
\begin{align}
	\label{dot_x1}
	\dot{\hat{x}}_1= a_{11}\hat{x}_{1}+a_{12}\hat{x}_{2}+a_{13}\hat{x}_{3}+ \ldots + a_{1n}\hat{x}_{n} + \theta_1\hat{x}_1+ b_{1}u,
\end{align}
where $a_{ij}$ are elements of known matrix $A_0(t)$, $b_1$ is an element of known matrix $B(t)$ and signals $\hat{x}_i$ and $u$ are known. For the further simplification let us define new variable:
\begin{equation*}
h_i=a_{i1}\hat{x}_{1}+a_{i2}\hat{x}_{2}+a_{i3}\hat{x}_{3}+ \ldots + a_{in}\hat{x}_{n} + b_{i}u.
\end{equation*}
The \eqref{dot_x1} can be rewritten:
\begin{equation}
\label{x_h}
\dot{\hat{x}}_1=h_1 + \theta_1\hat{x}_1.
\end{equation}
For the estimation of the unknown parameter $\theta_1(t)$ we need to make several steps (see \cite{ICUMT}, \cite{IFAC}). At first, let us consider new variable
\begin{align}
	\label{V}
	V_1= \hat{x}_1^2.
\end{align}
Then for derivative of $V$ we have
\begin{align}
	\dot{V}_1 &= 2\hat{x}_1h_1 + 2\hat{x}_1^2\theta_1(t).
\end{align}

Let us define another variable as follows:
\begin{equation}
	\xi_1=\ln V_1.
\end{equation}

The derivative $\dot{\xi}_1$ can be written in the following form:
\begin{equation}
	\label{ksi}
	\dot{\xi}_1=\frac{\dot{V}_1}{V_1}=2\alpha_1+2\theta_1(t),
\end{equation}
where $\alpha_1=\frac{h_1}{\hat{x}_1}$.

Now the unknown parameter $\theta_1(t)$ can be obtained from \eqref{ksi}:
\begin{equation}
	\label{theta}
	\theta_1(t)=\frac{1}{2}\dot{\xi}_1-\alpha_1.
\end{equation}

Let us rewrite \eqref{thet} for $\theta_1$ in the operator form and substitute it into \eqref{theta}. Thus, we obtain:
\begin{equation}
	p^2[\frac{1}{2}\dot{\xi}_1-\alpha_1]=-\omega_1^2(\frac{1}{2}\dot{\xi}_1-\alpha_1).
\end{equation}

By applying an LTI filter $\frac{\lambda}{(p+\lambda)^3}$, where $\lambda>0$ we can transform the previous equation into linear regression model of the following form:
\begin{equation}
	Y_1 = \Phi_1 k_1,
\end{equation}
where:
\begin{align*}
	Y_1&=\frac{1}{2}\frac{\lambda p^3}{(p+\lambda)^3}[\xi_1]-\frac{\lambda p^2}{(p+\lambda)^3}[\alpha_1]; \\
	\Phi_1 &= \frac{1}{2}\frac{\lambda p}{(p+\lambda)^3}[\xi_1]-\frac{\lambda }{(p+\lambda)^3}[\alpha_1]; \\
	k_1 &= -\omega_1^2.
\end{align*}

The unknown parameter $k_1$ can be estimated using any proper algorithm for linear regression parameter estimation. For example, we can use gradient algorithm \cite{Ljung, Sastry}:
\begin{equation}
	\label{grad}
	\dot{\hat{k}}_1=- \gamma_1 \Phi_1 (\Phi_1 \hat{k}_1 - Y_1),
\end{equation}
where $\gamma_1>0$.

After that the unknown constant parameter $\omega_1$ can be calculated:
\begin{equation}
	\label{hat_omega}
	\hat{\omega}_1=\sqrt{\left| \hat{k}_1 \right|}.
\end{equation}

In the next step we consider the equation \eqref{thet}, which solution takes the form
\begin{equation}
	\label{theta_sol}
	\theta(t)_1=l_1^T\chi_1,
\end{equation}
where $\chi_1 = \begin{bmatrix}
	\sin(\omega_1 t) \\
	\cos(\omega_1 t)
\end{bmatrix}$, 
$l_1=\begin{bmatrix}
	l_{11} \\
	l_{12}
\end{bmatrix}$ - is an unknown constant vector defined by initial conditions of the equation \eqref{thet}.

The initially unknown constant parameter $\omega_1$ can be replaced by its estimate $\hat{\omega}_1$ obtained from \eqref{hat_omega}. After that the expression \eqref{theta_sol} can be substituted to \eqref{dot_x1}:
\begin{equation}
	\label{hat_x1}
	\dot{\hat{x}}_1=l_1^T \hat{\chi_1}\hat{x}_1 + h_1,
\end{equation}
where $\hat{\chi}_1=\begin{bmatrix}
	\sin ({\hat{\omega}_1}t) \\
	\cos ({\hat{\omega}_1}t)
\end{bmatrix}$.

This equation can be also transformed by applying LTI filter $\frac{\lambda_1}{p+\lambda_1},$ where $\lambda_1>0$. The obtained linear regression model can be written in the following form:
\begin{equation}
	\label{regression}
	q_1=\phi_1^T l_1,
\end{equation}
where:
\begin{align}
	\label{mathcalY}
	q_1 &=\frac{\lambda_1 p}{p+\lambda_1}[\hat{x}_1]-\frac{\lambda_1}{p+\lambda_1}[h_1]; \\
	\phi_1 &= \frac{\lambda_1}{p+\lambda_1} \begin{bmatrix}
		\hat{x}_1 \sin (\hat{\omega}_1t) \\
		\hat{x}_1 \cos (\hat{\omega}_1t)
	\end{bmatrix}.
\end{align}

For estimation of unknown parameters $l_{11}$ and $l_{12}$ we suggest using dynamical regression extention and mixing (DREM) technology \cite{Aranovskiy, Ortega} in the form it was considered in \cite{Bobtsov}. Then the observer for unknown parameters $l_{11}$ and $l_{12}$ can be written in the following form 
\begin{align}
	\dot{\mathcal{Y}}_1 &= - \lambda_2 \mathcal{Y}_1 +  \lambda_2\phi_1^\top q_1, \\
	\dot\Omega_1 &=- \lambda_2 \Omega_1 +  \lambda_2\phi_1^\top \phi_1,	\\
	\dot {\hat{l}}_1 &=-\gamma_2 \Delta_1 (\Delta_1 \hat{l}_1-\mathcal{Z}_1),
\end{align}
with $\lambda _2>0$ and $\gamma_2>0$, 
$\hat{l}_1=
\begin{bmatrix}
	\hat{l}_{11} \\ \hat{l}_{12}
\end{bmatrix}$ and with the definitions
\begin{align}
	\mathcal{Z}_1 &= \mbox{adj}\{{\Omega_1}\}\mathcal{Y}_1,\\
	\label{delta}
	\Delta_1&=\det\{{\Omega_1}\}.
\end{align}

If we use \eqref{theta_sol}, we can obtain the estimate of the unknown time-varying parameter $ \hat\theta_1(t) $ by substitution of $\hat{l}_1$ and $\hat{\omega}_1$.

After the first unknown parameter is obtained, every $\theta_i(t)$ can be found by the following algorithm.
First of all, let us write down the equation for $\dot{\hat{x}}_i$. According to the assumption 2, it takes the following form:
\begin{equation}
\label{hat_x_i}
\dot{\hat{x}}_i=h_i + \theta_i(t)\hat{x}_s,
\end{equation}
where $1 \leq s \leq i$.

In case $s=i$, the algorithm of the unknown parameter $\theta_i(t)$ estimation is corresponds the algorithm for $\theta_1(t)$ described above.

In case $s < i$, the expression of the unknown parameter can be obtained from \eqref{hat_x_i}:
\begin{equation}
\label{theta_i}
\theta_i(t) = \frac{\dot{\hat{x}}_i}{\hat{x}_s}-\frac{h_i}{\hat{x}_s}.
\end{equation}

Similarly to the algorithm for the $\theta_1(t)$, we substitute \eqref{theta_i} to \eqref{thet} and apply LTI filter $\frac{\lambda_i^3}{(p+\lambda_i)^3}$, where $\lambda_i>0$:
\begin{align}
\label{theta_i_filter}
\frac{\lambda_i^3p^2}{(p+\lambda_i)^3}[\frac{\dot{\hat{x}}_i}{\hat{x}_s}]&-\frac{\lambda_i^3p^2}{(p+\lambda_i)^3}[\frac{h_i}{\hat{x}_s}]= \\
\nonumber
&-\omega_i^2(\frac{\lambda_i^3}{(p+\lambda_i)^3}[\frac{\dot{\hat{x}}_i}{\hat{x}_s}]-\frac{\lambda_i^3}{(p+\lambda_i)^3}[\frac{h_i}{\hat{x}_s}]).
\end{align}

According to Swapping Lemma \cite{Jonsson}:
\begin{equation}
\label{lemma}
\frac{\lambda_i}{(p+\lambda_i)}[\frac{\dot{\hat{x}}_i}{\hat{x}_s}]=\frac{1}{x_{s}}\frac{\lambda_i p}{p+\lambda_i}x_{i} + \frac{\lambda_i}{p+\lambda_i}[\frac{\dot{x}_{s}}{x_{s}^2}\frac{p}{p+\lambda_i}[x_{i}]],
\end{equation}
where $\dot{\hat{x}}_s$ is assumed to be known as it can be obtained on previous steps.

After transforming \eqref{theta_i_filter} using \eqref{lemma}, linear regression model can be obtained:
\begin{equation*}
Y_i = \Phi_i k_i,
\end{equation*}
where:
\begin{align*}
Y_i = \frac{\lambda_i^2p^2}{(p+\lambda_i)^2}[\frac{1}{x_{s}}\frac{\lambda_i p}{p+\lambda_i}x_{i} &+ \frac{\lambda_i}{p+\lambda_i}[\frac{\dot{x}_{s}}{x_{s}^2}\frac{p}{p+\lambda_i}[x_{i}]]] - \\
&\frac{\lambda_i^3p^2}{(p+\lambda_i)^3}[\frac{h_i}{\hat{x}_s}] \\
\Phi_i = \frac{\lambda_i^2}{(p+\lambda_i)^2}[\frac{1}{x_{s}}\frac{\lambda_i p}{p+\lambda_i}x_{i} &+ \frac{\lambda_i}{p+\lambda_i}[\frac{\dot{x}_{s}}{x_{s}^2}\frac{p}{p+\lambda_i}[x_{i}]]] - \\
&\frac{\lambda_i^3}{(p+\lambda_i)^3}[\frac{h_i}{\hat{x}_s}] \\
k_i = &-\omega_{i}^2.
\end{align*}

The unknown parameter estimate $\hat{k}_i$ can be obtained using gradient algorithm. Then, using the obtained value we can calculate the unknown parameter estimate $\hat{\omega}_i$. The further process of the unknown parameter $\theta_i(t)$ estimation is completely similar to the algorithm of the first unknown parameter $\theta_1(t)$ estimation.
\section{NUMERICAL EXAMPLE AND SIMULATION}

Let us apply the proposed algorithm to the system \eqref{sys_new}, \eqref{out_new} with the following parameters
\begin{align*}
&A(t) = \begin{bmatrix}
0 &0.1-0.1\sin(t)+0.1\sin(5t)+1.5\cos(5t) \\
-0.1 &-1+0.5\cos(2t)
\end{bmatrix}; \\
&B(t)=\begin{bmatrix}
-1 \\
4
\end{bmatrix}; 
C(t) = \begin{bmatrix}
1 &1
\end{bmatrix}; 
D(t) = \begin{bmatrix}
\theta(t) &0 \\
0 &0
\end{bmatrix}.
\end{align*}
The unknown parameter $\theta(t)$ is defined by a differential equation \eqref{thet} with $l=\begin{bmatrix}
l_{1} \\
l_{2}
\end{bmatrix}=\begin{bmatrix}
3 \\
0.5
\end{bmatrix}, \chi = \begin{bmatrix}
\sin(3 t) \\
\cos(3 t)
\end{bmatrix}$.

The problem is to estimate the state vector of the system and the parameter $\theta(t)$ assuming that the constant parameters $l$ and $\omega$ are unknown. 

\subsection{State observer}

Let us consider matrices:
\begin{align*}
G=\begin{bmatrix}
1 \\
0
\end{bmatrix}, N = \begin{bmatrix}
-4 \\
4
\end{bmatrix}, L = \begin{bmatrix}
0.1 \\
0.5
\end{bmatrix}, \\
M = \begin{bmatrix}
0.1 &1-0.5\cos(2t) \\
-0.1 &-1+0.5\cos(2t).
\end{bmatrix}
\end{align*}

Since these matrices satisfy the conditions \eqref{con1}-\eqref{con4}, the state vector estimate $\hat{x}$ can be found by substituting these matrices into \eqref{obs_fin}.

\subsection{Time-varying parameter estimation}

Since the matrix $D(t)$ contains unknown parameter in the first row only, let us consider just the first equation of the system with the state vector estimate:
\begin{equation}
\label{hat_x}
\dot{\hat{x}}_1=\theta(t)\hat{x}_1+a_{12}\hat{x}_2-u,
\end{equation}
where $a_{12}=0.1-0.1\sin t+0.1\sin (5t)+1.5\cos (5t)$.

By applying an LTI filter $\frac{\lambda}{(p+\lambda)^3}$, $\lambda>0$, we can transform the previous equation into linear regression model of the following form
\begin{equation}
Y = \Phi k,
\end{equation}
where:
\begin{align*}
Y&=\frac{\lambda p^3}{(p+\lambda)^3}[\frac{1}{2}\xi]-\frac{\lambda p^2}{(p+\lambda)^3}[\alpha]; \\
\Phi &= \frac{\lambda p}{(p+\lambda)^3}[\frac{1}{2}\xi]-\frac{\lambda }{(p+\lambda)^3}[\alpha]; \\
k &= -\omega^2.
\end{align*}

For the estimation of the unknown parameter $k$ the gradient algorithm \eqref{grad} is used and for calculation of unknown parameter $\omega$ the equation \eqref{hat_omega} is used.

For estimation of unknown parameters $l_{1}$ and $l_{2}$ expression \eqref{hat_x1} was transformed into linear regression form \eqref{regression} and the unknown parameters were estimated using observer \eqref{mathcalY}...\eqref{delta}.

\subsection{Simulation results}
The described algorithm for the considered system was simulated in MATLAB Simulink. The real state vector $x$ and state vector estimate $\hat{x}$ are shown on the fig.~\ref{xx_hat}. The fig.~\ref{x_err} demonstrates the convergence to zero of the state estimation error $\tilde{x}$. 
   \begin{figure}[thpb]
      \centering
      \includegraphics[scale=0.55]{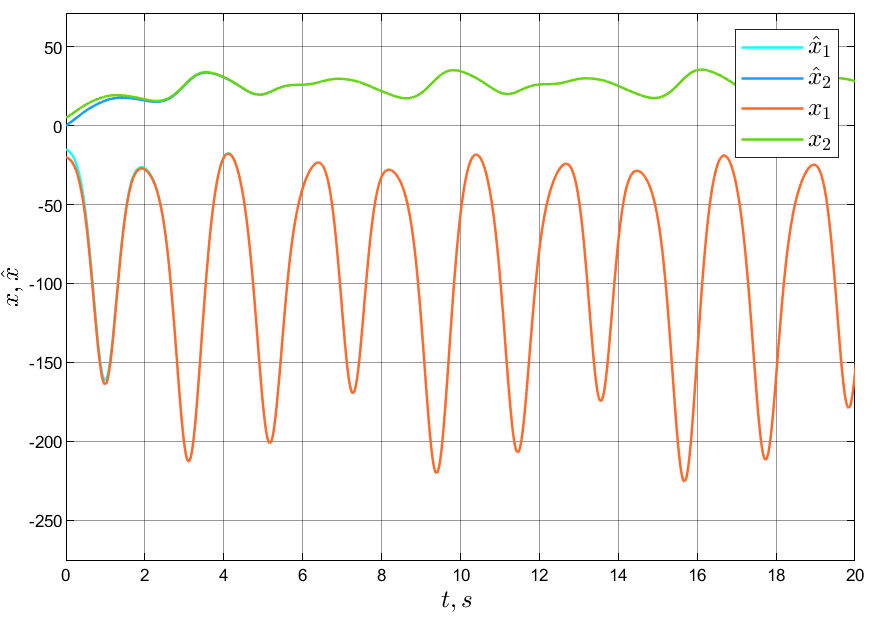}
      \caption{The transients of the state vector $x$ and its estimate $\hat{x}$}
      \label{xx_hat}
   \end{figure}
     
   \begin{figure}[thpb]
      \centering
      \includegraphics[scale=0.55]{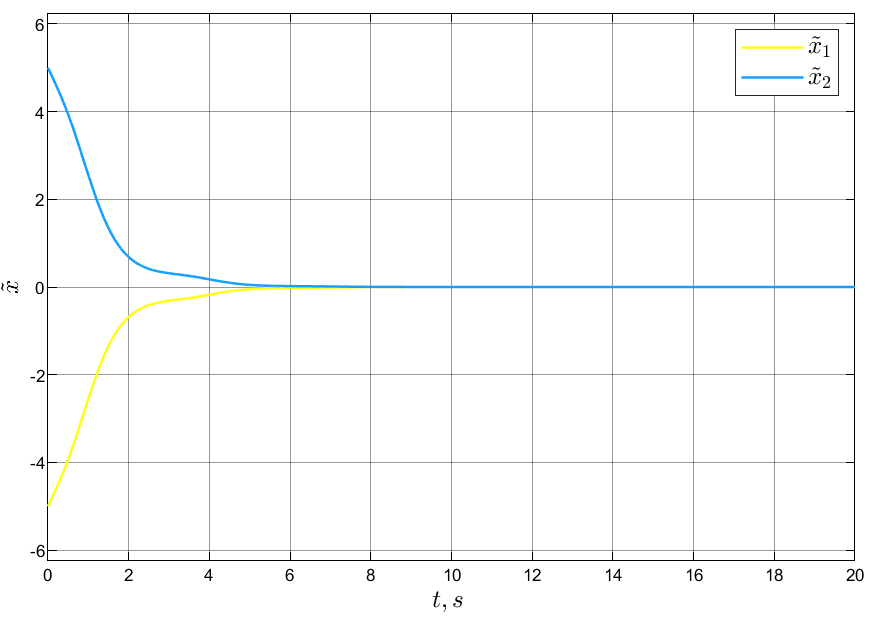}
      \caption{The transient of the state vector estimation error $\tilde{x}$}
      \label{x_err}
   \end{figure}
   
Identification of the unknown parameter $\theta(t)$ is based on the obtained state vector estimate. The first step of the identification process is estimation of the unknown constant parameter $\omega$. The estimation error $\tilde{\omega}=\omega-\hat{\omega}$ transient is demonstrated on the fig.~\ref{omega}. The error is converges to zero.

\begin{figure}[thpb]
      \centering
      \includegraphics[scale=0.55]{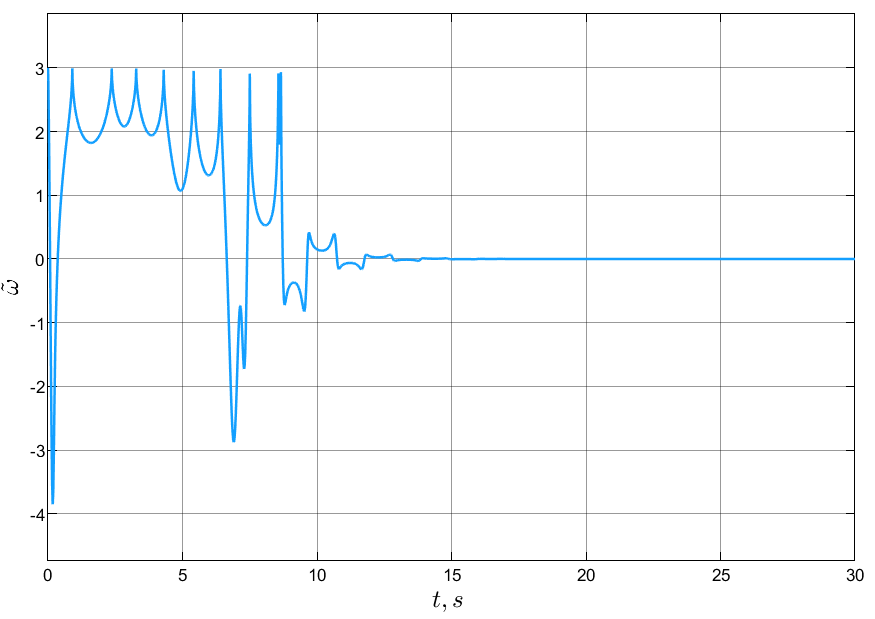}
      \caption{The transient of the estimation error $\tilde{\omega}$}
      \label{omega}
\end{figure}

After the estimate of the constant parameter $\hat{\omega}$ is obtained, the pair of the unknown parameters $l_1$ and $l_2$ is estimated. The estimation error $\tilde{l}l-\hat{l}$ transient is demonstrated on the fig.~\ref{a}. The error is converges to zero.

\begin{figure}[thpb]
      \centering
      \includegraphics[scale=0.5]{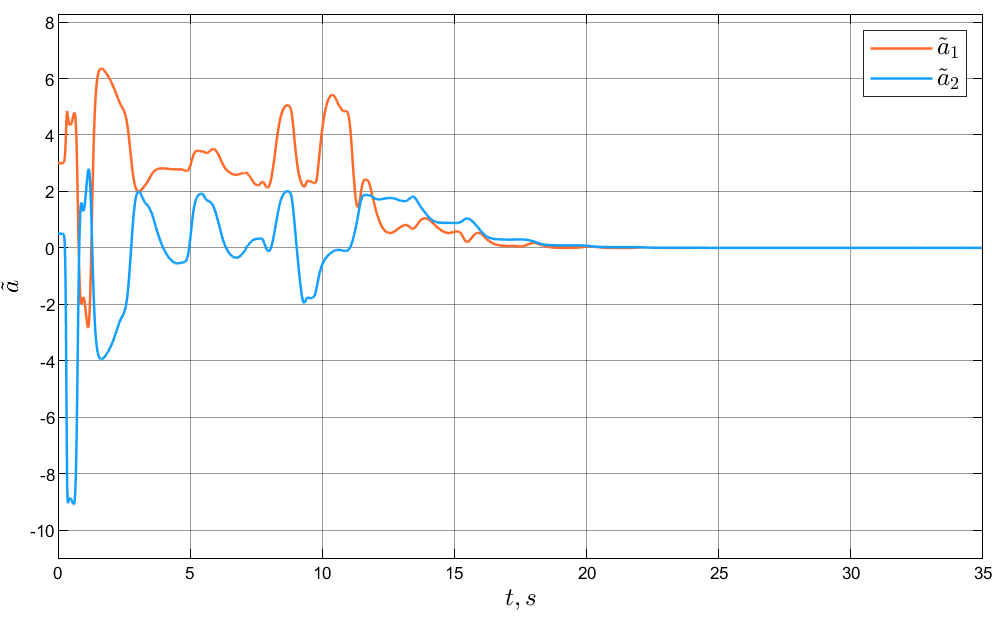}
      \caption{The transient of the estimation error $\tilde{a}$}
      \label{a}
\end{figure}

The parameters estimates $\hat{\omega}$ and $\hat{a}$ are used for estimation of the unknown parameter $\theta(t)$. The transients of the parameter $\theta(t)$ real value and its estimate $\hat{\theta}(t)$ are demonstrated on the fig.~\ref{theta_hat}. The fig.~\ref{theta_err} demonstrates the convergence to zero of the estimation error $\tilde{\theta}=\theta - \hat{\theta}$.  

\begin{figure}[thpb]
      \centering
      \includegraphics[scale=0.5]{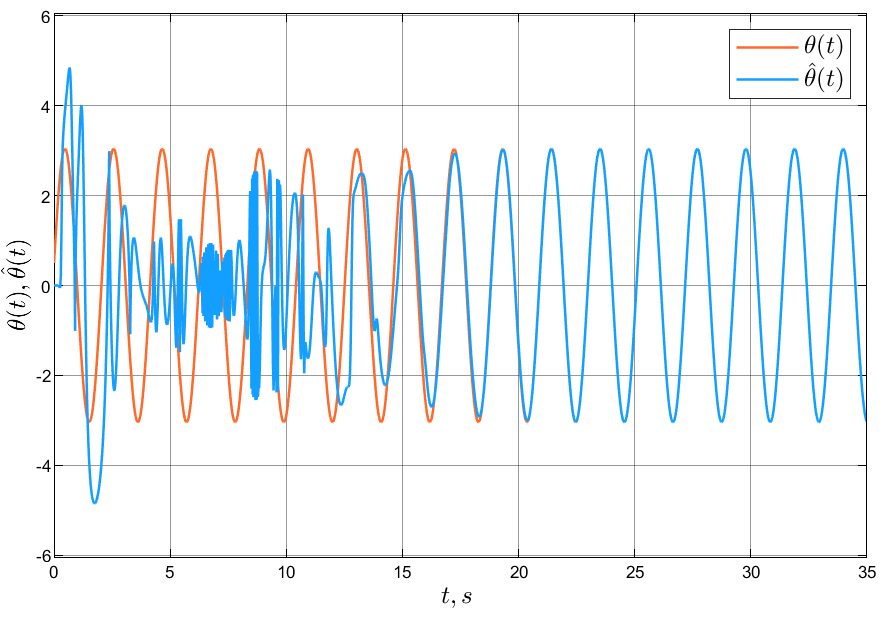}
      \caption{The transient of the parameter $\theta(t)$ and its estimate $\hat{\theta}(t)$}
      \label{theta_hat}
\end{figure}

\begin{figure}[thpb]
      \centering
      \includegraphics[scale=0.49]{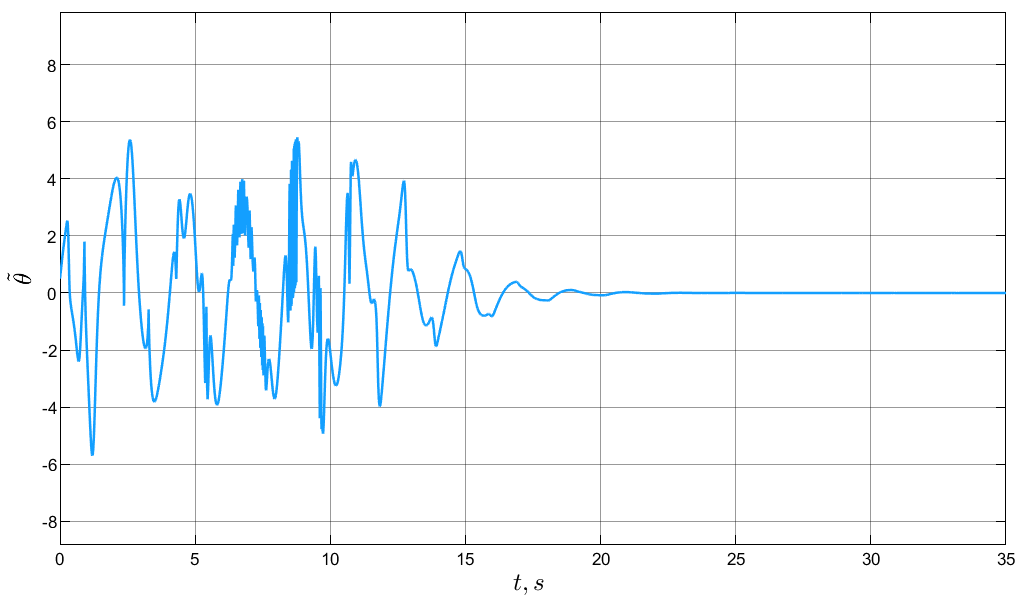}
      \caption{The transient of the estimation error $\tilde{\theta}$}
      \label{theta_err}
\end{figure}

\section{CONCLUSIONS}

An adaptive observer for a linear time-variant system \eqref{sys}, \eqref{out} with unknown time-varying parameters in the state matrix $A(t)$ was presented. The developed algorithm provides state vector estimates based on the known input signal and measured output signal. The estimation error converges to zero, that was demonstrated by simulation. The algorithm makes also possible to use state-based identification methods when the state vector is unknown. To demonstrate it, the identification algorithm that was proposed in \cite{ICUMT} was applied in this paper to estimate the unknown time-varying parameter of the system.

\addtolength{\textheight}{-12cm}   


\bibliography{ifacconf}

\end{document}